\documentclass{article}

\usepackage[top=1.2in,bottom=1.5in,left=1.2in,right=1.2in]{geometry}
\usepackage{amsmath, amssymb, amsthm}
\usepackage{hyperref}
\usepackage[dvipsnames]{xcolor}
\usepackage{titlesec}
\usepackage{graphicx}

\hypersetup{
    colorlinks=true,
    citecolor=teal,
    linkcolor=BrickRed
}

\author{Olivier Peltre}
\date{\small
Univ. Artois, UR 2462, Laboratoire de Mathématiques de Lens (LML), F-62300 Lens, France}

\newcounter{theorem}
 \setcounter{theorem}{0}
 \newtheorem{Theorem}[theorem]{Theorem}
 
 \newcounter{lemma}
 \setcounter{lemma}{0}
 \newtheorem{Lemma}[lemma]{Lemma}
 
 \newcounter{corollary}
 \setcounter{corollary}{0}

 \newcounter{proposition}
 \setcounter{proposition}{0}

 \newcounter{problem}
 \setcounter{problem}{0}
 \newtheorem{Problem}[problem]{Problem}
 
 \newcounter{definition}
 \setcounter{definition}{0}
 \newtheorem{Definition}[definition]{Definition}
 
\title{{\bf Local Max-Entropy and Free Energy Principles}}

\newcommand{\Prob}{Prob}
\newcommand{\bb}{b}
\newcommand{\R}{\mathbb{R}}
\newcommand{\Z}{\mathbb{Z}}

\newcommand{\F}{F}
\newcommand{\varF}{\mathcal{F}}
\newcommand{\U}{\mathcal{U}}

\renewcommand{\Prob}{\mathrm{Prob}}
\newcommand{\E}{\mathbb{E}}

\newcommand{\Alg}{\mathbf{Alg}}
\newcommand{\Vect}{\mathbf{Vect}}
\newcommand{\Set}{\mathbf{Set}}

\newcommand{\Pow}{\mathcal{P}}
\newcommand{\incl}{\subseteq}
\newcommand{\cont}{\supseteq}
\newcommand{\eqvl}{\Leftrightarrow}

\DeclareMathOperator{\Ker}{Ker}
\DeclareMathOperator{\Img}{Im}

\DeclareMathOperator{\e}{\mathrm{e}}

\renewcommand{\aa}{\mathrm{a}}
\renewcommand{\bb}{\mathrm{b}}
\newcommand{\cc}{\mathrm{c}}
\newcommand{\Om}{\Omega}
\newcommand{\ph}{\varphi}

\newcommand{\Fix}{\mathrm{Fix}}
\newcommand{\Fixt}{\mathrm{Fix}_+}

\begin{document}

\maketitle

\begin{abstract}
A statistical system is classically defined on a set of microstates 
$E$ by a global energy function $H : E \to \R$, 
yielding Gibbs probability measures (softmins) $\rho^\beta(H)$
for every inverse temperature $\beta = T^{-1}$.
Gibbs states are simultaneously characterised by free energy principles 
and the max-entropy principle, 
with dual constraints on inverse temperature $\beta$
and mean energy $\U(\beta) = \E_{\rho^\beta}[H]$ respectively.
The Legendre transform relates these diverse variational principles \cite{Jaynes-57, Marle-16}
which are unfortunately not tractable in high dimension.
\\
We here consider a finite free sheaf $E : \Pow(\Om)^{op} \to \Set$ of microstates, 
defined by $E_\aa = \prod_{i \in \aa} E_i$ for every region $\aa \incl \Om$,
whose size grows exponentially with $\Om$.
The global energy function $H : E \to \R$ is assumed given as a sum 
$H(x) = \sum_{\aa \incl \Om} h_\aa(x_{|\aa})$
of local short-range interactions 
$h_\aa : E_\aa \to \R$, indexed by bounded subregions.
The cluster variational method (CVM) \cite{Kikuchi-51,Morita-57,Pelizzola-05} can then be used to estimate
the global free energy $F(\beta) = - \frac 1 {\beta} \ln \sum \e^{- \beta H}$, 
by searching
for critical points of a localised approximation of a 
variational free energy. 
These critical points can be found by generalised belief propagation
(GBP) algorithms, according to a correspondence theorem 
initially conjectured by Yedidia et al \cite{Yedidia-2001, Yedidia-2005}. 
The proof of this theorem \cite{gsi19, phd} involved 
combinatorial and topological structures acting on 
dual complexes of local observables and local measures, 
on which continuous-time diffusion equations gave new message-passing
algorithms regularising GBP \cite{phd, gsi21}.
\\
We now show that stationary states of GBP algorithms 
solve a collection of equivalent variational principles 
on local Bethe-Kikuchi functionals, 
which approximate the free energy $F(\beta)$, the 
Shannon entropy $S(\U)$, 
and the variational free energy $\varF(\beta) = \U - \beta^{-1} S(\U)$ respectively.
This local form of Legendre duality yields a possibly degenerate relationship 
between the constraints $\U$ and $\beta$, exhibiting singularities on loopy hypergraphs
where GBP converges to multiple equilibria.  
%%% refs %%%
\nocite{Bethe-35, Kikuchi-51, Morita-57,Mezard-Montanari}
\end{abstract}

\section{Local statistical systems: topology and combinatorics}

Consider a finite free sheaf $E : \Pow(\Om)^{op} \to \Set$ of microstates, 
defined for every region $\aa \incl \Om$ by 
$E_\aa = \prod_{i \in \aa} E_i$. For every $\bb \incl \aa$ 
the natural restriction $\pi^{\aa\to\bb} : E_\aa \to E_\bb$ forgets the state 
of variables outside $\bb$, we write $x_{\aa | \bb} = \pi^{\aa\to\bb}(x_\aa)$.
We recall necessary definitions and theorems from \cite{phd} in this section.

{\it Local observables} form a cosheaf\footnote{
    In the sense that $\R^{E_{\aa \sqcup \bb}} = \R^{E_\aa} \otimes \R^{E_\bb}$ 
    for every disjoint pair $\aa, \bb \incl \Om$. The tensor product $\otimes$ is
    a coproduct in the category of commutative algebras $\Alg_c$, and 
    $\R^{E_\aa} = \otimes_{i \in \aa} \R^{E_i}$ can be identified
    with a colimit in $\Alg_{c}$.  
} 
$\R^E : \Pow(\Om) \to \Alg_{c}$ of commutative algebras, 
whose arrows consist of natural inclusions $\R^{E_\bb} \incl \R^{E_\aa}$ 
when identifying each local algebra with a low dimensional subspace of $\R^E$.
Let us write $\tilde{h}^\aa_\bb(x_\aa) = h_\bb(x_{\aa|\bb})$ when
the extension should be made explicit.
Given a hypergraph $K \incl \Pow(\Om)$,  one can define a chain complex
of local observables \cite[Chapter 3]{phd} \cite{SGA-4-V} indexed by the nerve of $K$:
\begin{equation} 
    C_0(K, \R^E) \overset{\delta}{\longleftarrow} 
    C_1(K, \R^E) \overset{\delta}{\longleftarrow} \;\dots\; \overset{\delta}{\longleftarrow}
    C_n(K, \R^E)
\end{equation}
An element of $C_0(K, \R^E)$ is a collection $[h_\aa : E_\aa \to \R\,|\, \aa \in K]$ and 
an element of $C_1(K, \R^E)$ is a collection 
$[\ph_{\aa \to \bb} : E_\bb \to \R \,|\, \aa,\bb \in K,\, \aa \supset \bb ]$:
they are statistical analogs of energy densities and heat fluxes respectively.
The {\it boundary operator} $\delta : C_1(K, \R^E) \to C_0(K, \R^E)$ acts like a divergence: 
\begin{equation} 
    \delta\ph_\bb(x_\bb) = \sum_{\aa \supset \bb} \ph_{\aa \to \bb}(x_\bb)
    - \sum_{\bb \supset \cc} \ph_{\bb \to \cc}(x_{\bb|\cc})
\end{equation}
We showed in \cite{phd} that stationary points of GBP algorithms can be found by 
diffusion equations of the form $\frac {dh}{dt} = \delta \Phi(h)$, 
where $\Phi : C_0(K, \R^E) \to C_1(K, \R^E)$ 
is a flux functional satisfying some consistency constraints
\cite[thm 5.13]{phd}, while homology classes $[h] = h + \delta C_1(K, \R^E)$
explored by diffusion can be characterised by a global energy
conservation constraint $\sum_{\aa \in K} \tilde{h}^\Om_\aa = H_\Om$
\cite[thm 2.13]{phd}.

{\it Local densities} form a dual functor 
$\R^{E*}: \Pow(\Om)^{op} \to \Vect$ of vector spaces, whose arrows 
consist of partial integrations 
$\pi_*^{\aa \to \bb} : \R^{E_\aa *} \to \R^{E_\bb *}$,
also called marginal projections. In the
cochain complex of local densities $C_\bullet(K, \R^E)^*$,
the {\it differential operator} $d : C_0(K, \R^E)^* \to C_1(K, \R^E)^*$
acts by:
\begin{equation}
    d p_{\aa \to \bb}(x_\bb) = p_\bb(x_\bb) - \sum_{x_{\aa|\bb} =\, x_\bb} p_\aa(x_\aa)
\end{equation}
{\it Consistent densities} satisfying $dp = 0$ are hence characterised by 
cohomology classes $[p] \in \Ker(d)$ in $C_0(K, \R^E)^*$.
They can be extendend to a 
global measure $p_\Om$ such that 
$p_\aa = \pi_*^{\Om \to \aa}(p_\Om)$ for all $\aa \in K$, 
although a positive $p_\Om$ may not always exist for all positive and consistent 
$(p_\aa)_{\aa \in K}$. 

The {\it Gibbs correspondence}
relating potentials $h \in C_0(K, \R^E)$ 
and beliefs $p \in C_0(K, \R^E)^*$ is essential to 
the dynamic of GBP. Time evolution 
preserves the total energy $[h]$ while trying 
to enforce consistency constraints on $p$, given 
for some choice of inverse temperature $\beta$ by:
\begin{equation}
    p_\aa(x_\aa) = \frac 1 {Z_\aa} 
    \e^{- \beta H_\aa(x_\aa)}
    \quad\mathrm{with}\quad
    H_\aa(x_\aa) = \sum_{\bb \incl \aa} h_\bb(x_{\aa|\bb})
\end{equation}
We shall write $p = \rho^\beta(H)$ with $H = \zeta(h)$. 
Then $\rho^\beta$ is the non-linear 
Gibbs state map describing statistical equilibrium in physics, 
and the {\it zeta transform} $\zeta$, 
acting by local summation on subregions, is a combinatorial automorphism of 
$C_0(K, \R^E)$.
{\it Möbius inversion} 
formulas give an explicit characterization of the inverse automorphism $\mu = \zeta^{-1}$:
\begin{equation} \label{moebius}
    h_\aa(x_\aa) = \sum_{\bb \incl \aa} 
    \mu_{\aa \to \bb} \: H_{\bb}(x_{\aa|\bb}) 
\end{equation}
where the coefficients $\mu_{\aa\to\bb} \in \Z$ solve\footnote{
    The matrix $\mu$ can be computed efficiently by iterating powers 
    of the nilpotent operator $(\zeta - 1)$. When $K$ describes a graph
    and has only regions of size 2, 1 and 0, one for instance has $(\zeta - 1)^3 = 0$.
}
inclusion-exclusion principles on 
the partial order $(K, \incl)$. 
{\it Bethe-Kikuchi coefficients} are defined as 
$c_\bb = \sum_{\aa \cont \bb} \mu_{\aa \to \bb}$, 
or equivalently by the inclusion-exclusion principle 
$\sum_{\aa \cont \bb} c_\bb = 1$ for all $\bb \in K$.
A consequence of 
\ref{moebius} is that for any $x \in E$:
\begin{equation} \label{U_Bethe}
    \sum_{\aa \in K} h_\aa(x_{|\aa}) = 
    \sum_{\bb \in K} c_\bb \: H_\bb(x_{|\bb})
\end{equation} 
Combinatorial theory is at the heart of Bethe approximations and hence of the CVM.
It is related to a Dirichlet convolution structure on the partial order $(K, \incl)$ 
as well described in the work of Rota \cite{Rota-64}, who saw in Möbius inversion 
formulas a discrete analogy with the fundamental theorem of calculus.  
The reciprocal pair of automorphisms $(\zeta, \mu)$ can indeed be 
extended to the whole complex $C_\bullet(K, \R^E)$, with
$\zeta$ and $\delta$ satisfy commutation relations that resemble Stokes formulas
\cite[chap 3]{phd}.
\section{Cluster variational principles solved by diffusion}

In the following, assume $K \incl \Pow(\Om)$ is fixed and closed by intersection.
Let us introduce notations $C_\bullet = C_\bullet(K, \R^E)$ 
for the complex of observables, 
$\Delta_0 \incl C^*_0$ for the convex subspace of positive beliefs, 
constrained by $p_\aa \in \Prob(E_\aa)$ and $p_\aa > 0$ for all $\aa \in K$, 
and $\Gamma_0 = \Delta_0 \cap \Ker(d)$ for the subconvex of consistent beliefs, 
further constrained by $dp = 0$. 

In this section, we define variational principles on $\Gamma_0$ 
and characterise the dynamical equations one can be use to solve them. 
Because $\rho^\beta$ is not linear, 
the manifold $\Fix^\beta = (\rho^\beta \circ \zeta)^{-1}(\Gamma_0)$ 
defined below is not a linear subspace of $C_0$,
despite $\Gamma_0 \incl \Delta_0$ being a convex polytope of $C_0^*$.

\begin{Definition} 
We call {\em consistent manifold} the non-linear subspace 
$\Fix^\beta= \{ v \in C_0 \,|\, 
\rho^{\beta}(\zeta v) \in \Gamma_0 \}$ 
for any choice of inverse temperature $\beta > 0$, and let $\Fix = \Fix^1$.
\end{Definition}

\begin{Definition} 
We say that a flux functional $\Phi : C_0 \to C_1$ is 
{\em projectively faithful} at $\beta > 0$ iff:
\begin{equation} 
\delta \Phi(v) = 0 \Leftrightarrow v \in \Fix^\beta
\end{equation}
We call $\Phi$ {\em projectively consistent} if
it satisfies the weaker condition $\delta\Phi_{|\Fix^\beta} = 0$.
\end{Definition}

When $\Phi$ is projectively faithful at $\beta = 1$, stationary potentials of 
the diffusion $\frac {dv}{dt} = \delta \Phi(v)$ in $C_0$ are projected onto 
stationary beliefs $p = \rho(\zeta v) \in \Gamma_0$ of GBP 
\cite[thm 5.15]{phd}. 
The purpose of GBP is hence to find points at the intersection 
of energy conservation and consistency constraint surfaces, 
i.e. given initial potentials $h \in C_0$, GBP looks for $v \in [h] \cap \Fix$,
where $[h] = h + \delta C_1$ is the homology class of $h$. 

The standard flux functional associated to GBP is projectively faithful 
\cite[prop 5.17]{phd} at $\beta = 1$. It is defined for every 
pair $\aa \supset \bb$ in $K$ by: 
\begin{equation}
\Phi(v)_{\aa \to \bb}(x_\bb) = 
- \ln \sum_{x_{\aa |\bb} = x_\bb} e^{V_\bb(x_\bb) -V_\aa(x_\aa)}
\quad {\rm where}\quad 
V_\aa(x_\aa) \sum_{\bb \incl \aa} v_\bb(x_{\aa | \bb})  
\end{equation}
One can nonetheless build better functionals 
by making use of Bethe-Kikuchi coefficients \cite{gsi21}, faithful 
at least in a neighbourhood of $\Fix$ \cite[prop 5.37]{phd}.
A consequence of faithfulness is that
{\it stationarity of beliefs implies
stationarity of messages}\footnote{
    Faithfulness hence consists of a transversality property 
    between $\Img(\Phi_*)$ and $\Ker(\delta)$. 
}. One can hence shift the focus from messages 
to beliefs (or from flux terms to potentials), without losing information 
on GBP equilibria. 

\begin{Definition} 
For every $\aa \incl \Om$ and $\beta > 0$, 
let us introduce the following local functionals: 
\begin{itemize}
    \item $S_\aa : \Prob(E_\aa) \to \R$ 
    the {\em Shannon entropy} defined by 
    $S_\aa(p_\aa) = - \sum p_\aa \ln(p_\aa)$,
    \item $\varF^\beta_\aa : \Prob(E_\aa) \times \R^{E_\aa} \to \R$
    the {\em variational free energy} defined by 
    $\varF^\beta_\aa(p_\aa, H_\aa) = \E_{p_\aa}[H_\aa] - \frac 1 \beta S_\aa(p_\aa)$,
    \item $\F^\beta_\aa : \R^{E_\aa} \to \R$ 
    the {\em free energy} defined by 
    $\F^\beta_\aa(H_\aa) = - \frac 1 \beta \ln \sum \e^{- \beta H_\aa}$,
\end{itemize}
\end{Definition}

Recall that $\F^\beta_\aa(H_\aa) = \min_{p_\aa} \varF^\beta_\aa(p_\aa, H_\aa)$
is the Legendre transform of Shannon entropy 
\cite{Jaynes-57,Marle-16,Bennequin-IEM}, 
the global optimum satisfying 
$p^*_\aa = \rho^\beta_\aa(H_\aa) = \F^\beta_{\aa*}(H_\aa)$, i.e.
the differential $F_{\aa*}^\beta : \R^{E_\aa} \to \R^{E_\aa *}$  
of free energy yields Gibbs states $p^*_\aa \in \Prob(E_\aa)$.

The Bethe-Kikuchi approximation estimates
the global variational free energy $\varF^\beta_\Om$ by 
a sum of local free energy cumulants $c_\aa \varF^\beta_\aa$. 
This consists of a truncated Möbius inversion \cite{Morita-57},
whose error decays exponentially as
$K \incl \Pow(\Om)$ grows coarse with respect to the range of interactions
\cite{Schlijper-83}.

\begin{Problem} \label{varF-crit}
Let $H = \zeta h \in C_0$ denote local energies 
and chose an inverse temperature $\beta > 0$.
\\
Find $p \in \Gamma_0$ critical for the 
{\em Bethe-Kikuchi variational energy} 
$\check \varF : \Delta_0 \times C_0 \to \R$ given by:
\begin{equation}
\check\varF^\beta(p, H) = \sum_{\aa \in K} c_\aa \: \varF^\beta_\aa(p_\aa, H_\aa) 
\end{equation}
\end{Problem}

The theorem of Yedidia, Freeman and Weiss \cite{Yedidia-2005} 
states that solutions of problem \ref{varF-crit} are found by their GBP 
algorithm. We shall precise this correspondence
\cite[theorem 4.22]{phd} below and 
now introduce the two dual variational principles on entropy 
and free energy we are concerned with.

As in the exact global case, the max-entropy principle takes place with constraints 
on the Bethe-Kikuchi mean energy:
\begin{equation}
\check U(p, H) = \sum_{\aa \in K} c_\aa \: \E_{p_\aa}[H_\aa]
= \sum_{\aa \in K} \E_{p_\aa}[h_\aa] = \langle p, h \rangle
\end{equation}
For consistent $p \in \Gamma_0$, 
the mean energy thus computed is exact by \ref{U_Bethe} and linearity
of integration,
errors in Bethe-Kikuchi approximations only coming
from the non-linearities of information functionals.

\begin{Problem} \label{S-crit}
Let $H = \zeta h \in C_0$ denote local energies and 
chose a mean energy $\cal U \in \R$.
\\
Find $p \in \Gamma_0$ constrained to $\langle p, h \rangle = \U$ 
and critical for the {\em Bethe-Kikuchi entropy}
$\check S : \Delta_0 \to \R$ given by:
\begin{equation}
    \check S(p) = \sum_{\aa \in K} c_\aa \: S_\aa(p_\aa)
\end{equation}
\end{Problem}

Problems \ref{varF-crit} and \ref{S-crit} are 
both variational principles on $p \in \Gamma_0$,
with the common consistency constraint $dp = 0$ 
but dual temperature and energy constraints respectively.
The following free energy principle instead explores
interaction potentials $v \in C_0$ satisfying a global 
energy conservation constraint $[v] = [h] \in C_0 / \delta C_1$
yet at a fixed temperature $T = \beta^{-1}$.

\begin{Problem} \label{F-crit}
Let $H = \zeta h \in C_0$ denote local energies and chose an inverse
temperature $\beta > 0$. 
\\
Find $V = \zeta v \in C_0$ such that $v \in h + \delta C_1$,
critical for the {\em Bethe-Kikuchi energy} $\check\F : C_0 \to \R$
given by: 
\begin{equation} \label{F_Bethe}
    \check \F^\beta(V) = \sum_{\aa \in K}
    c_\aa \: F^\beta_\aa(V_\aa)
\end{equation}
\end{Problem}

We will now show that problems \ref{varF-crit}, \ref{S-crit} and \ref{F-crit}
are equivalent to solving local consistency constraints on the beliefs 
$p = \rho^\beta(\zeta v)$ 
induced by potentials $v \in h + \delta C_1$ that satisfy the energy conservation 
constraints.
\section{Correspondence theorems}

Solutions of the local max-entropy and free energy principles 
share a very common structure.
Dual mean energy and
temperature constraints may however not lead to a univocal 
relationship between $\U$ and $\beta$, 
as is the case for their global counterparts. 

\begin{Theorem} \label{thm1}
    
    Let $\beta > 0$ and $h \in C_0$.
    Under the correspondence $p = \rho^\beta(\zeta v)$,
    problem \ref{varF-crit} is equivalent to finding 
    consistent conservative potentials $v \in [h] \cap \Fix^\beta$.

\end{Theorem}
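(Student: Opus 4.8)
The plan is to read Problem \ref{varF-crit} as a constrained critical-point problem for $\check\varF^\beta(\cdot,H)$ on the convex set $\Gamma_0 = \Delta_0 \cap \Ker(d)$ and to extract its stationarity (Karush--Kuhn--Tucker) conditions, exploiting that the consistency constraint $dp = 0$ is linear and that $d = \delta^*$ is the adjoint of the boundary operator $\delta$. At an interior $p \in \Gamma_0$ the tangent space is $T_p\Gamma_0 = \Ker(d) \cap N_0$, where $N_0$ collects the variations that are zero-sum on each region (preserving normalisation). Since $\partial_{p_\aa} \varF^\beta_\aa(p_\aa,H_\aa) = H_\aa + \frac1\beta(\ln p_\aa + 1)$, the first variation reads $\mathrm d\check\varF^\beta(p)\cdot\dot p = \langle C(H + \frac1\beta \ln p),\dot p\rangle$ for $\dot p \in N_0$, where $C$ denotes multiplication by the Bethe--Kikuchi coefficients $c_\aa$ and the constant terms drop because $\dot p$ is region-wise zero-sum.

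First I would isolate the single combinatorial identity on which both implications rest: $\zeta^* C$ acts as the identity on $\Ker(d)$. This is precisely the exactness of the Bethe--Kikuchi mean energy: for consistent $p$, identity \ref{U_Bethe} gives $\langle \zeta^* C p, h\rangle = \langle Cp, \zeta h\rangle = \check U(p,H) = \langle p, h\rangle$ for every $h \in C_0$, whence $\zeta^* C p = p$ on $\Ker(d)$. Since $C$ is diagonal hence self-adjoint, $(C\zeta)^* = \zeta^* C$, and because $d = \delta^*$ in finite dimension one has $\Img(\delta) = (\Ker d)^\circ$ (the annihilator of $\Ker d$). A short adjoint computation then shows the invertible operator $C\zeta$ preserves $\Img(\delta)$, so its inverse $\zeta^{-1}C^{-1}$ also maps $\Img(\delta)$ onto $\Img(\delta)$.

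The forward implication then goes as follows. Criticality of $p$ on $\Gamma_0$ says $C(H + \frac1\beta\ln p)$ annihilates $T_p\Gamma_0 = \Ker(d)\cap N_0$, hence lies in $\Img(\delta)$ plus the region-constant observables; choosing a flux $\ph \in C_1$ realising this yields, per region, $c_\aa(H_\aa + \frac1\beta\ln p_\aa) = (\delta\ph)_\aa + \mathrm{const}$. Solving for $\ln p_\aa$ exhibits $p = \rho^\beta(V)$ with $V_\aa = H_\aa - \frac1{c_\aa}(\delta\ph)_\aa$, so $V - H = -C^{-1}\delta\ph$. Setting $v = \zeta^{-1}V$, the identity above gives $v - h = -\zeta^{-1}C^{-1}\delta\ph \in \Img(\delta) = \delta C_1$, so $v \in [h]$, while $\rho^\beta(\zeta v) = p \in \Gamma_0$ gives $v \in \Fix^\beta$ by definition. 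Conversely, given $v \in [h]\cap\Fix^\beta$, write $v = h + \delta\chi$ and set $p = \rho^\beta(\zeta v) \in \Gamma_0$, which is feasible; then for $\dot p \in T_p\Gamma_0$ one computes $\mathrm d\check\varF^\beta(p)\cdot\dot p = \langle C(H - V),\dot p\rangle = -\langle \delta\chi, \zeta^* C\dot p\rangle = -\langle \chi, d(\zeta^* C\dot p)\rangle = -\langle \chi, d\dot p\rangle = 0$, using $\ln p_\aa = -\beta V_\aa + \mathrm{const}$, the adjunction $d = \delta^*$, and $\zeta^* C\dot p = \dot p$ on $\Ker(d)$; hence $p$ is critical.

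I expect the main obstacle to be the handling of the weighting $C$ at regions where a coefficient $c_\aa$ vanishes or is negative: solving the stationarity relation for $\ln p_\aa$ divides by $c_\aa$, and the identification $v = \zeta^{-1}V \in [h]$ relies on $C\zeta$ restricting to an automorphism of $\Img(\delta)$. The delicate point is therefore to confine the division to regions with $c_\aa \neq 0$, to check that a vanishing coefficient makes $\check\varF^\beta$ insensitive to $p_\aa$ so that the consistency constraint alone pins that marginal, and to verify that the normalisation multipliers and the ambiguity of $\ph$ modulo $\Ker(\delta)$ never obstruct the passage between the gauge freedom $\delta C_1$ of conservative potentials and the Lagrange-multiplier freedom carried by the consistency constraint.
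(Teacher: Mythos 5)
Your overall strategy matches the paper's: the paper proves Theorem \ref{thm1} by the same Lagrange-multiplier analysis it spells out for Theorem \ref{thm2} (with the multiplier term $\beta h$ there replaced here by the fixed datum $\beta$), using $d = \delta^*$ to trade the consistency constraint $dp=0$ for multipliers in $\Img(\delta)$, normalisation for multipliers in $\R^K$, and a combinatorial identity to pass between $cV$ and $v = \mu V$. Your converse direction is correct and complete as written, and your ``single combinatorial identity'' $\zeta^* c = \mathrm{id}$ on $\Ker(d)$ is precisely the dual form of the paper's Lemma \ref{lemma1}: both rest on equation \ref{U_Bethe} together with the extendability of consistent beliefs to a global (possibly signed) section, hence on $K$ being closed under intersection.

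The genuine gap is in your forward direction: you solve the stationarity relation $c_\aa\bigl(H_\aa + \frac{1}{\beta}\ln p_\aa\bigr) = (\delta\ph)_\aa + \mathrm{const}$ for $\ln p_\aa$ by dividing by $c_\aa$, and you transport $\delta\ph$ through $(c\zeta)^{-1}$; both steps require $c$ invertible. The paper explicitly warns, just before defining $\Fixt^\beta$, that $c_\bb$ may vanish on some $\bb \in K$ even when $K$ is the $\cap$-closure of its maximal regions, so this is not a negligible edge case; and your proposed fallback --- that a vanishing $c_\aa$ makes $\check\varF^\beta$ insensitive to $p_\aa$ so that consistency alone ``pins'' that marginal --- does not produce the required conclusion $v \in [h]$, since $v = \mu V$ still involves $V_\aa$ at exactly those regions. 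The fix is already in your hands: un-dualise your identity. Since $\zeta^* c - \mathrm{id}$ kills $\Ker(d)$, its adjoint satisfies $c\zeta u - u \in \Ker(d)^\circ = \Img(\delta)$ for \emph{every} $u \in C_0$, which is Lemma \ref{lemma1} in the form $c - \mu = \delta\Psi$. Then from the stationarity condition $c(H - V) \in \R^K + \delta C_1$ you get directly
\begin{equation*}
h - \mu V \;=\; \mu(H - V) \;=\; c(H - V) - \delta\Psi(H - V) \;\in\; \R^K + \delta C_1 ,
\end{equation*}
with no division by $c_\aa$; the residual local constants $\R^K$ are homologous to a single global constant, which you absorb using the additive freedom in choosing $V$ with $\rho^\beta(V) = p$ (the correspondence $p = \rho^\beta(\zeta v)$ is insensitive to that constant, while the class $[v]$ is not). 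This is exactly how the paper's proof of Theorem \ref{thm2} proceeds, and it renders your invertibility hypothesis, and the automorphism claim for $c\zeta$ on $\Img(\delta)$, unnecessary.
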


\begin{Theorem} \label{thm2}
    Let $\U \in \R$ and $h \in C_0$.
    Under the correspondence $p = \rho^1(\zeta \bar v)$, 
    problem \ref{S-crit} is equivalent to finding
    consistent conservative potentials
    $\bar v \in [\beta h] \cap \Fix^1$ for some $\beta > 0$
    and such that $\langle p, h \rangle = \U$.
\end{Theorem}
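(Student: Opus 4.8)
The plan is to read Problem \ref{S-crit} as the Legendre dual of Problem \ref{varF-crit} in the conjugate pair $(\U, \beta)$, to show that their constrained stationarity equations coincide once $\beta$ is identified with the energy-constraint multiplier, and then to renormalise the resulting potentials to $\beta = 1$ so that Theorem \ref{thm1} applies verbatim. Throughout I would stay on $\Gamma_0$, where $\check U(p, \zeta h) = \langle p, h\rangle$ by \ref{U_Bethe}, so that the objective of Problem \ref{varF-crit} restricted to $\Gamma_0$ is simply $\langle p, h\rangle - \beta^{-1}\check S(p)$.

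First I would write the Lagrangian for Problem \ref{S-crit}. The feasible set $\Gamma_0 \cap \{\langle p, h\rangle = \U\}$ carries local normalisation $p_\aa \in \Prob(E_\aa)$, consistency $dp = 0$, and the affine energy constraint. Introducing a consistency multiplier $\lambda \in C_1$ paired through the adjunction $\langle \lambda, dp\rangle = \langle \delta\lambda, p\rangle$, local constants for normalisation, and a single scalar $\beta$ for the energy constraint, stationarity of $\check S$ reads $\nabla_p \check S(p) - \beta\, h \in \Img(\delta)$ modulo local constants on each region. The point to emphasise is that this is, term for term, the condition $\nabla_p(\langle p, h\rangle - \beta^{-1}\check S) \perp T\Gamma_0$ obtained for $\check\varF^\beta$ at fixed inverse temperature $\beta$: since $(T\Gamma_0)^\perp = \Img(\delta)$ plus local constants, both principles demand membership of $\nabla_p \check S - \beta h$ in the same subspace. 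Hence a belief $p$ is critical for Problem \ref{S-crit} with multiplier $\beta$ if and only if it is critical for Problem \ref{varF-crit} at inverse temperature $\beta$ and additionally satisfies $\langle p, h\rangle = \U$; the only difference is bookkeeping, $\beta$ being prescribed and $\U$ emergent in the former, and conversely in the latter.

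With this identification I would apply Theorem \ref{thm1} at inverse temperature $\beta$: under $p = \rho^\beta(\zeta v)$, criticality for Problem \ref{varF-crit} is equivalent to $v \in [h] \cap \Fix^\beta$. It then remains to renormalise to $\beta = 1$. Here I would use the elementary identity $\rho^\beta(\zeta v) = \rho^1(\zeta(\beta v))$, valid because $\rho^\beta_\aa(H_\aa) \propto \e^{-\beta H_\aa}$ and $\zeta$ is linear. Setting $\bar v = \beta v$ gives $p = \rho^1(\zeta \bar v)$; writing $v = h + \delta\lambda$ yields $\bar v = \beta h + \delta(\beta\lambda) \in [\beta h]$, and $\rho^1(\zeta\bar v) = \rho^\beta(\zeta v) \in \Gamma_0$ says precisely $\bar v \in \Fix^1$. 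Carrying along the constraint, this exhibits the solutions of Problem \ref{S-crit} as exactly the consistent conservative potentials $\bar v \in [\beta h] \cap \Fix^1$ with $\langle \rho^1(\zeta\bar v), h\rangle = \U$ for some $\beta > 0$, as asserted.

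I expect the main obstacle to lie in the first identification, namely in checking that the two stationarity equations genuinely coincide rather than merely resemble each other. The gradient $\nabla_p \check S$ carries the Bethe--Kikuchi weights ($\partial S_\aa / \partial p_\aa = -(\ln p_\aa + 1)$ scaled by $c_\aa$), and reconciling these overcounting factors with the clean Gibbs form $p_\aa = \rho^\beta_\aa(V_\aa)$, $V = \zeta v$, is exactly the content already packaged in Theorem \ref{thm1} through the commutation of $\zeta$ and $\delta$. I would therefore keep the stationarity condition abstract, as membership of $\nabla_p \check S - \beta h$ in $\Img(\delta)$ modulo local constants, so that it matches Problem \ref{varF-crit} without re-deriving the $\zeta$--$\delta$ correspondence, and invoke Theorem \ref{thm1} as a black box. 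A secondary point needing a remark is the sign of the multiplier: positivity $\beta > 0$ is required for $\Fix^\beta$ to be defined and corresponds, as in the global case, to mean energies $\U$ below the unconstrained entropy maximum, so I would restrict the correspondence to that regime in line with the statement.
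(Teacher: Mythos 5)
Your argument is correct, but it runs in the opposite logical direction to the paper's. The paper proves Theorem \ref{thm2} directly: it sets $\bar V_\bb = -\ln p_\bb$, computes the entropy differential $\check S_*(p)_\aa = c_\aa(\bar V_\aa - 1_\aa)$, introduces multipliers $\lambda_\aa 1_\aa + \beta h_\aa + \delta\ph_\aa$ for normalisation, energy and consistency, and then --- the crux --- invokes Lemma \ref{lemma1} ($c - \mu = \delta\Psi$) to trade the $c$-weighted criticality equation for the homology statement $\bar v = \mu \bar V \in [\beta h] + \R^K$, finally killing the local constants by renormalising within $\bar v + \R^K$; Theorem \ref{thm1} is then only said to be ``very similar'' and referred to the thesis. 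You instead take Theorem \ref{thm1} as a black box, identify the stationarity conditions of problems \ref{S-crit} and \ref{varF-crit} with $\beta$ as the energy multiplier, and conclude by the rescaling $\bar v = \beta v$, $\rho^\beta(\zeta v) = \rho^1(\zeta \beta v)$ --- precisely the equivalence $[\beta h]\cap\Fix^1 \leftrightarrow [h]\cap\Fix^\beta$ that the paper records in its conclusion. This buys modularity and makes the Legendre duality between the two problems transparent; the cost is that all the combinatorial content (the Bethe--Kikuchi weights in $\nabla\check S$, and Lemma \ref{lemma1}'s conversion of $c$ into $\mu$ modulo $\delta C_1$) is displaced into a black box that this paper proves only by external citation, so as a self-contained account your proof still owes the computation the paper actually performs here. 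Two points to your credit: your identification of the gradients on $T\Gamma_0 \incl \Ker(d)$, where pairing against $c\,\zeta h$ agrees with pairing against $h$, silently uses the dual face of Lemma \ref{lemma1} (or the global-extension property of $\Ker(d)$), and you correctly flag this rather than hand-wave it; and you also flag the positivity of the multiplier $\beta$, a caveat the paper itself leaves implicit.
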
  

Let $c : C_0 \to C_0$ denote multiplication by Bethe-Kikuchi coefficients.
This operation may not be invertible in general, as $c_\bb$ may vanish
on some $\bb \in K$. In fact, one knows that $c_\bb = 0$ whenever $\bb$ 
is not an intersection of maximal regions $\aa_1, \dots, \aa_n \in K$, 
although assuming that $K$ is the $\cap$-closure of 
a set of maximal regions does not always imply 
invertibility of $c$. 

\begin{Definition}
Let us call 
$\Fixt^\beta = \{v + b \:|\: 
v \in \Fix^\beta, b \in \Ker(c \zeta)\} \incl C_0$ 
the {\em weakly consistent manifold}. 
In particular, $\Fixt^\beta = \Fix^\beta$ when $c$ is invertible. 
\end{Definition}

\begin{Theorem} \label{thm3}
    Given $\beta > 0$ and $h \in C_0$, 
    problem \ref{F-crit} is equivalent to finding 
    weakly consistent conservative potentials 
    $w \in [h] \cap \Fixt^\beta$
    and they can be univocally mapped onto $[h] \cap \Fix^\beta$
    by a retraction $r^\beta : \Fixt^\beta \to \Fix^\beta$. 
\end{Theorem}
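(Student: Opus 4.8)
The plan is to reduce Problem~\ref{F-crit} to a single critical equation by a first--variation computation, to read that equation through the adjoint of $c\zeta$, and finally to resolve the degeneracy created by $\Ker(c)$ using the Möbius counting identity \ref{U_Bethe}.

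\emph{First variation.} Since the constraint set is the affine space $h + \delta C_1$, with tangent space $\delta C_1$, I would perturb $v \mapsto v + t\,\delta\psi$ for $\psi \in C_1$, setting $V = \zeta v$ and $p = \rho^\beta(\zeta v)$. Using $\F^\beta_{\aa*}(V_\aa) = \rho^\beta_\aa(V_\aa) = p_\aa$, the differential of \ref{F_Bethe} reads
\begin{equation}
    \tfrac{d}{dt}\big|_0 \check\F^\beta(\zeta(v + t\,\delta\psi))
    = \sum_{\aa \in K} c_\aa \,\langle p_\aa, (\zeta\,\delta\psi)_\aa\rangle
    = \langle c\,p, \zeta\,\delta\psi\rangle
    = \langle \zeta^* c\, p, \delta\psi\rangle
    = \langle d\,\zeta^* c\, p, \psi\rangle ,
\end{equation}
where $\zeta^*$ is the adjoint of $\zeta$, so that $\zeta^* c = (c\zeta)^*$ is the marginal operator $(\zeta^* c\,p)_\bb = \sum_{\aa \cont \bb} c_\aa\, \pi^{\aa\to\bb}_*(p_\aa)$, and I used the $d$--$\delta$ adjunction $\langle d q, \psi\rangle = \langle q, \delta\psi\rangle$. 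Hence $v$ is critical for Problem~\ref{F-crit} if and only if $d\,\zeta^* c\, p = 0$ with $p = \rho^\beta(\zeta v)$.

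\emph{Resolving the kernel of $c$.} The heart of the argument is the identity $\zeta^* c\,|_{\Ker d} = \mathrm{id}$: for consistent $q$ with $dq = 0$ one has $(\zeta^* c\, q)_\bb = \big(\sum_{\aa \cont \bb} c_\aa\big)\,q_\bb = q_\bb$ by the counting relation $\sum_{\aa \cont \bb} c_\aa = 1$, which is \ref{U_Bethe} read on consistent densities. Assuming $d\,\zeta^* c\, p = 0$, I would set $p' = \zeta^* c\, p \in \Ker d$; applying the identity to $p'$ gives $\zeta^* c\, p' = p'$, whence $\zeta^* c\,(p - p') = 0$, i.e. $p = p' + k$ with $p'$ consistent and $k \in \Ker(c)$ supported on the regions where $c_\aa = 0$. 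Thus a critical $p$ coincides with the consistent family $p'$ on every region with $c_\aa \neq 0$ (in particular on the maximal regions, where $c_\aa = 1$, so that $p' = \zeta^* c\, p$ stays positive and lies in $\Gamma_0$). Conversely, if $p$ agrees on all $c_\aa \neq 0$ regions with some consistent $p'$, the same computation returns $\zeta^* c\, p = p' \in \Ker d$. Through the region--wise bijectivity of $\rho^\beta_\aa$ (up to the shift--gauge of Gibbs states), ``agreeing with a consistent family on the $c_\aa \neq 0$ regions'' is exactly the belief description of $\Fixt^\beta$: writing $\bar v \in \Fix^\beta$ for the potential with $\rho^\beta(\zeta \bar v) = p'$, one has $c\zeta\bar v = c\zeta v$, so $v - \bar v \in \Ker(c\zeta)$ and $v \in \Fix^\beta + \Ker(c\zeta) = \Fixt^\beta$. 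This gives the equivalence with $v \in [h] \cap \Fixt^\beta$, the weakly consistent conservative potential $w$ of the statement.

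\emph{The retraction and its fibres.} At the level of beliefs, $r^\beta$ is the canonical assignment $p \mapsto p' = \zeta^* c\, p$, which is single--valued and, by the identity above, restricts to the identity on $\Gamma_0$; lifting through $\rho^\beta$ (after the usual gauge--fixing of the shift--invariant local potentials) defines $r^\beta : \Fixt^\beta \to \Fix^\beta$ with $r^\beta|_{\Fix^\beta} = \mathrm{id}$. It remains to check that $r^\beta$ preserves the homology class, for which I would establish $\Ker(c\zeta) \incl \delta C_1$: for $b \in \Ker(c\zeta)$, identity \ref{U_Bethe} gives $\sum_{\aa \in K} \tilde b^\Om_\aa = \sum_{\bb \in K} c_\bb\, (\zeta b)_\bb = 0$ since each $c_\bb (\zeta b)_\bb$ vanishes, so $b$ has zero total energy and is a coboundary by the energy--conservation characterisation of homology classes \cite[thm 2.13]{phd}. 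Consequently $w - r^\beta(w) \in \Ker(c\zeta) \incl \delta C_1$ and $r^\beta(w) \in [h] \cap \Fix^\beta$; since $r^\beta$ fixes $[h] \cap \Fix^\beta$ pointwise it maps $[h] \cap \Fixt^\beta$ onto it, its fibres $\Ker(c\zeta)$ accounting for the announced degeneracy in the $\U$--$\beta$ relationship.

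\emph{Main obstacle.} The delicate step is the middle one: promoting the weak equation $d\,\zeta^* c\, p = 0$ to honest consistency on the $c_\aa \neq 0$ regions. Everything hinges on $\zeta^* c$ acting as the identity on $\Ker d$ --- equivalently on the Bethe--Kikuchi counting numbers --- which both pins the consistent component $p' = \zeta^* c\, p$ and forces the defect $p - p'$ into $\Ker(c)$. I expect the bookkeeping of positivity (to guarantee $p' \in \Gamma_0$) and of the shift--gauge in the potential--level lift of $r^\beta$ to be the only remaining technical points, the decomposition $p = p' + k$ and the inclusion $\Ker(c\zeta) \incl \delta C_1$ carrying the essential content.
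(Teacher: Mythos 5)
Your proposal is correct and, at its core, follows the same route as the paper's proof: the first variation along the tangent space $\delta C_1$ gives criticality $\Leftrightarrow d\,\zeta^* c\, p = 0$ (the paper phrases this as $\zeta^*(cp) \in \Img(\delta)^\perp = \Ker(d)$); the critical belief is then split into a consistent part $p' = \zeta^* c\, p$ plus a defect in $\Ker(c)$ --- note that passing from $\zeta^* c\,(p - p') = 0$ to $p - p' \in \Ker(c)$ tacitly uses that $\zeta^*$ is invertible, with inverse $\mu^*$, which is true but worth one line; and your belief-level retraction $p \mapsto \zeta^* c\, p$, lifted through $\rho^\beta$, is exactly the paper's $R^\beta = \zeta \circ r^\beta \circ \mu$. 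Where you genuinely differ is in how the combinatorial input is discharged: the paper invokes Lemma \ref{lemma1} ($c - \mu = \delta \Psi$) in dual form, getting $cp = \mu^* p$ on $\Ker(d)$ and transferring $cV = cW$ into $\mu V \in \mu W + \delta C_1$, whereas you apply the counting identity $\sum_{\aa \cont \bb} c_\aa = 1$ directly to consistent beliefs and prove $\Ker(c\zeta) \incl \delta C_1$ inline from \ref{U_Bethe} together with the energy-conservation characterisation of homology classes. This is legitimate --- indeed it is precisely how the paper proves Lemma \ref{lemma1} itself --- so your version is marginally more elementary but carries the same mathematical content; what the paper's packaging buys is a reusable homotopy $\Psi$ shared with Theorem \ref{thm2}.

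Two sub-steps deserve pinning down. First, when you introduce ``$\bar v \in \Fix^\beta$ with $\rho^\beta(\zeta \bar v) = p'$'' and assert $c\zeta\bar v = c\zeta v$, the lift is only defined up to local constants, and for the naive gauge $\bar V_\aa = -\frac{1}{\beta}\ln p'_\aa$ the asserted identity \emph{fails} by the constants $c_\aa \F^\beta_\aa(V_\aa)$, since on a region with $c_\aa \neq 0$ one has $V_\aa = -\frac{1}{\beta}\ln p_\aa + \F^\beta_\aa(V_\aa)$. The correct normalisation is the paper's equation \ref{W}, $W_\bb = -\frac{1}{\beta}\ln q_\bb + \F^\beta_\bb(V_\bb)$, which matches local free energies and makes $cV = cW$ exact; your appeal to ``the usual gauge-fixing'' covers this in principle, but the specific choice is the substance of the retraction's formula and should be made explicit. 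Second, your positivity parenthesis is too quick: $p'_\aa = p_\aa > 0$ holds as stated only on maximal regions (where $\zeta^* c$ reduces to the identity), and since $\zeta^* c$ has negative coefficients, positivity on a non-maximal $\bb$ needs the extra step of combining consistency $dp' = 0$ with marginalisation from a maximal $\aa \cont \bb$, giving $p'_\bb = \pi_*^{\aa\to\bb}(p_\aa) > 0$. Both points are one-line repairs; with them your argument is a complete proof of Theorem \ref{thm3}.
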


The retraction $r^\beta$ will be defined
by equation \ref{W} in the proof below, it maps solutions of 
problem \ref{F-crit} 
onto those of \ref{varF-crit}. We now proceed to prove theorems 
\ref{thm2} and \ref{thm3}.
We shall discuss their relationship afterwards. The following 
combinatorial lemma is essential to the proofs and rather subtle
in spite of its apparent simplicity (see \cite[chap 4]{phd} for
detailed formulas). 

\begin{Lemma} \label{lemma1}
    There exists a linear flux map $\Psi : C_0 \to C_1$ such that
    $c - \mu = \delta \Psi$.
\end{Lemma}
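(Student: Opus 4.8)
\noindent\emph{Proof plan.} The plan is to show that the linear operator $c - \mu$ maps $C_0$ into $\Img(\delta)$, after which a linear section of $\delta$ produces $\Psi$ with no further work. I would first introduce the global summation map $\Sigma : C_0 \to \R^{E_\Om}$, $\Sigma(h) = \sum_{\aa \in K} \tilde{h}^\Om_\aa$, sending a family of local observables to the total energy it induces on $E_\Om$. Reindexing the two sums that define $\delta$ shows that each flux $\ph_{\aa \to \bb}$ occurs once as an incoming term (at $\bb$) and once as an outgoing term (at $\aa$), so these cancel after extension to $E_\Om$; hence $\Sigma \circ \delta = 0$, i.e. $\Img(\delta) \incl \Ker(\Sigma)$. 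The reverse inclusion is \cite[thm 2.13]{phd}, which characterises every homology class $[h] = h + \delta C_1$ by its global energy $\Sigma(h) = H_\Om$, so that $\Ker(\Sigma) = \Img(\delta)$ exactly.

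Next I would verify that $c$ and $\mu$ agree after global summation, which is precisely equation \ref{U_Bethe}: writing $h = \mu H$ and exchanging the order of summation through the Bethe-Kikuchi identity $c_\bb = \sum_{\aa \cont \bb} \mu_{\aa \to \bb}$ gives $\sum_{\aa} (\mu H)_\aa(x_{|\aa}) = \sum_{\bb} c_\bb H_\bb(x_{|\bb})$, that is $\Sigma \mu = \Sigma c$. Therefore $\Sigma(c - \mu) = 0$ and $\Img(c - \mu) \incl \Ker(\Sigma) = \Img(\delta)$. Since $C_0$ and $C_1$ are finite dimensional, choosing a linear complement of $\Ker(\delta)$ in $C_1$ yields a linear section $s : \Img(\delta) \to C_1$ with $\delta s = \mathrm{id}$, and $\Psi := s \circ (c - \mu)$ is then linear and satisfies $\delta \Psi = c - \mu$.

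I expect the main obstacle to lie not in this abstract existence argument but in the explicit, natural $\Psi$ whose detailed formulas appear in \cite[chap 4]{phd} and on which the correspondence theorems rely. Building it amounts to distributing the global cancellation $\Sigma(c - \mu) = 0$ over the arrows of the poset $(K, \incl)$, assigning to each $\aa \to \bb$ a flux assembled from Möbius coefficients summed over intervals, and then checking the operator identity $\delta \Psi = c - \mu$ componentwise. Because $c$ is diagonal whereas $\mu$ is triangular on $(K, \incl)$, this componentwise verification is a delicate inclusion-exclusion computation, and it is there — rather than in the homological existence — that the lemma's apparent simplicity conceals its real subtlety.
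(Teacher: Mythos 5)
Your proposal is correct and takes essentially the same route as the paper: both rest on the characterisation of homology classes $[h] = h + \delta C_1$ by the global energy conservation constraint (valid since $K$ is closed under intersection, \cite[cor 2.14]{phd}) combined with equation \ref{U_Bethe} to show $\Img(c - \mu) \incl \Img(\delta)$, from which $\Psi$ is obtained by linearity. Your additional details --- the telescoping verification that global summation annihilates $\Img(\delta)$, and the explicit linear section $s$ with $\Psi = s \circ (c - \mu)$ --- simply make precise what the paper compresses into ``one can construct $\Psi$'', and your closing remark correctly locates the real subtlety where the paper does, namely in the explicit componentwise formulas of \cite[chap 4]{phd}.
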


\begin{proof}[Proof of lemma \ref{lemma1}]
\nocite{Kellerer-64,Matus-88} 
One can completely characterise
homology classes $[h] = h + \delta C_1$ by the global energy conservation constraint 
$\sum_\aa h_\aa = H_\Om$ when $K \incl \Pow(\Om)$ is closed under 
intersection \cite[cor 2.14]{phd}.
Therefore equation \ref{U_Bethe} implies that $h = \mu H$ and $c H$ 
are homologous, i.e. $(c - \mu)(H) \in \delta C_1$ and 
one can construct $\Psi : C_0 \to C_1$ 
such that $(c - \mu)(H) = \delta \Psi(H)$.  
\end{proof}

\begin{proof}[Proof of theorem \ref{thm1}]
Very similar to that of \ref{thm2} below, the precise proof can 
be found in \cite[thm 4.22]{phd}.
\end{proof}

\begin{proof}[Proof of theorem \ref{thm2}]
Given $\U \in \R$ and $h \in C_0$, let $p \in \Gamma_0$ 
such that $\langle p, h \rangle = \U$ be a solution of 
\ref{S-crit}. Define local energies $\bar V = \zeta \bar v \in C_0$ 
by letting $\bar V_\bb = - \ln p_\bb$ for all $\bb$ so that 
$p = \rho^1(\bar V) = \rho^1(\zeta \bar v)$.  
The consistency of $p$ therefore implies $\bar v \in \Fix^1$. 
This holds for any other potential $\bar v' \in \bar v + \R^K$, 
as addition of local energy constants preserves Gibbs states. 

Let us now show that the energy constraint $\langle p, h \rangle = \U$ 
and the consistency constraints $dp = 0$ imply 
$\bar v \in [\beta h]$ for some $\beta$. 
By duality, recalling that $\delta = d^*$ we know that 
$\langle q, u \rangle = 0$ for all $q \in \Ker(d)$ 
is equivalent to $u \in \Img(\delta)$.
The consistency constraint hence introduces Lagrange multipliers 
$\delta \ph \in \delta C_1$. 
The linear constraints $\langle p_\aa, 1_\aa \rangle = 1$ and $\langle p, h \rangle = \U$
respectively introduce Lagrange multipliers terms of the form 
$\lambda + \beta h \in \R^K  + \R h$.

A classical computation yields the differential 
of local entropies $S_{\bb *}(p_\bb) = - \langle \cdot, 1_\bb + \ln(p_\bb)\rangle$, 
which coincides with $\langle \cdot , \bar V_\bb\rangle$ 
on tangent fibers of $\Prob(E_\bb)$ 
due to the normalisation constraint. 
Therefore $p$ is critical if and only if:
\begin{equation} \label{dS}
\check S_*(p)_\aa = c_\aa (\bar V_\aa - 1_\aa) = \lambda_\aa 1_\aa + \beta h_\aa + \delta \ph_\aa 
\end{equation}
By lemma \ref{lemma1} we know that $c \bar V$ and $\bar v = \mu \bar V$
are homologous, 
i.e. $\bar v = c \bar V - \delta \Psi(c \bar V)$. Therefore 
\ref{dS} is equivalent to $\bar v \in \lambda + \beta h + \delta C_1$.
Up to a boundary term of $\delta C_1$, the local constants $\lambda \in \R^K$ 
could also be absorbed into a single constant $\lambda' \in \R$. 
Enforcing the constraint $\langle p, \bar v' \rangle = \beta \U$ on 
the equivalent potentials $\bar v' \in \bar v + \R^K$, 
one may furthermore ensure that $\lambda' = 0$ 
so that $\bar v' \in [\beta h] \cap \Fix^1$.
\end{proof}

\begin{proof}[Proof of theorem \ref{thm3}]
Given $H = \zeta h \in C_0$ and $\beta > 0$, 
let $V = \zeta v \in C_0$ be a solution of \ref{F-crit}. 
The energy constraint on $v \in h + \delta C_1$ 
implies $\check F^\beta_*(V)_{|\Img(\zeta\delta)} = 0$ with
the differential $\check F^\beta_* : C_0 \to C_0^*$ given by: 
\begin{equation}
    \check \F^\beta_*(V)_\aa = c_\aa \: \rho^\beta(V_\aa)
\end{equation}
For any subspace $B \incl C_0$, let us write $B^\perp \incl C_0^*$ for
the orthogonal dual (or annihilator) of $B$.
Letting $p = \rho^\beta(V)$
criticality is then equivalent to $\F^\beta_*(V) = c p \in \Img(\zeta \delta)^\perp$.
Recall that $d = \delta^*$ by definition and denote by 
$\zeta^* : q \mapsto q \circ \zeta$ the adjoint of $\zeta$: 

\begin{equation}
    cp \in \Img(\zeta\delta)^\perp 
    \quad\eqvl\quad   \zeta^*(cp) \in \Img(\delta)^\perp
    \quad\eqvl\quad   \zeta^*(cp) \in \Ker(d)
\end{equation}  
Therefore $V$ is critical for $\check \F^\beta_{|\Img(\zeta \delta)}$ 
if and only if $\zeta^*(c p) = \zeta^*(c \,\rho^\beta(V))$ is consistent.

Assume $p \in \Ker(d)$ is consistent. Then by the dual form of 
lemma \ref{lemma1} involving $(\delta \Psi)^* = \Psi^* d$ 
which vanishes on $\Ker(d)$, we have
$c p = \mu^* p$ 
and $\zeta^*(cp) = p$ is consistent as well. 
This shows that any consistent potential $v \in [h] \cap \Fix^\beta$ 
yields a critical point $V = \zeta v$ of the Bethe-Kikuchi energy 
$\check \F^\beta(V)$.

Reciprocally assume $q = \zeta^*(c p) \in \Ker(d)$, 
then $\mu^*(q) = c q$ by lemma \ref{lemma1} and $c q = c p$,
which means that $p_\bb$ and $q_\bb$ coincide on any $\bb \in K$ 
such that $c_\bb \neq 0$. 
Let us define energies $W = \zeta w \in C_0$ by: 
\begin{equation} \label{W}
W_\bb = - \frac 1 \beta \ln(q_\bb) + \F_\bb^\beta(V_\bb)
\end{equation}
Observing that $q = \rho^\beta(W)$ and
$\F^\beta_\bb(W_\bb) = \F^\beta_\bb(V_\bb)$ for all $\bb$,
one sees that $c p = c q$ implies $c V = c W$. 
By lemma \ref{lemma1} we get
$\mu V \in \mu W + \delta C_1$ so that the potentials $w = \mu W$ 
satisfy the global energy constraint $[w] = [v] = [h]$.
By the assumption $q \in \Ker(d)$ 
this shows that $w \in [h] \cap \Fix^\beta$. 
The relation $cV = cW$ implies
$v - w = \mu (V - W) \in \Ker(c\zeta)$, 
so that $v \in [h] \cap \Fixt^\beta$ 
is weakly consistent by definition of $\Fixt^\beta$. 

The retraction $r^\beta : \Fix_+^\beta \to \Fix^\beta$ 
is described more succinctly via its conjugate 
$R^\beta = \zeta \circ r^\beta \circ \mu$ acting
on local energies $V = \zeta v \in C_0$:
\begin{equation}
    R^\beta(V)_\bb = - \frac 1 \beta
    \ln \sum_{\aa \cont \bb} c_\aa \: \pi_*^{\aa \to \bb}
    \big(\rho^\beta_\aa(V_\aa) \big) 
    - \F_b^\beta(V_\bb)
\end{equation}
which amounts to applying the linear projection $\zeta^* c$ on beliefs
before chosing energies as \ref{W}.
\end{proof}

\section{Conclusion}

\vspace{0.3cm}
\begin{figure}[t]
    \sbox0{\includegraphics[width=\textwidth]{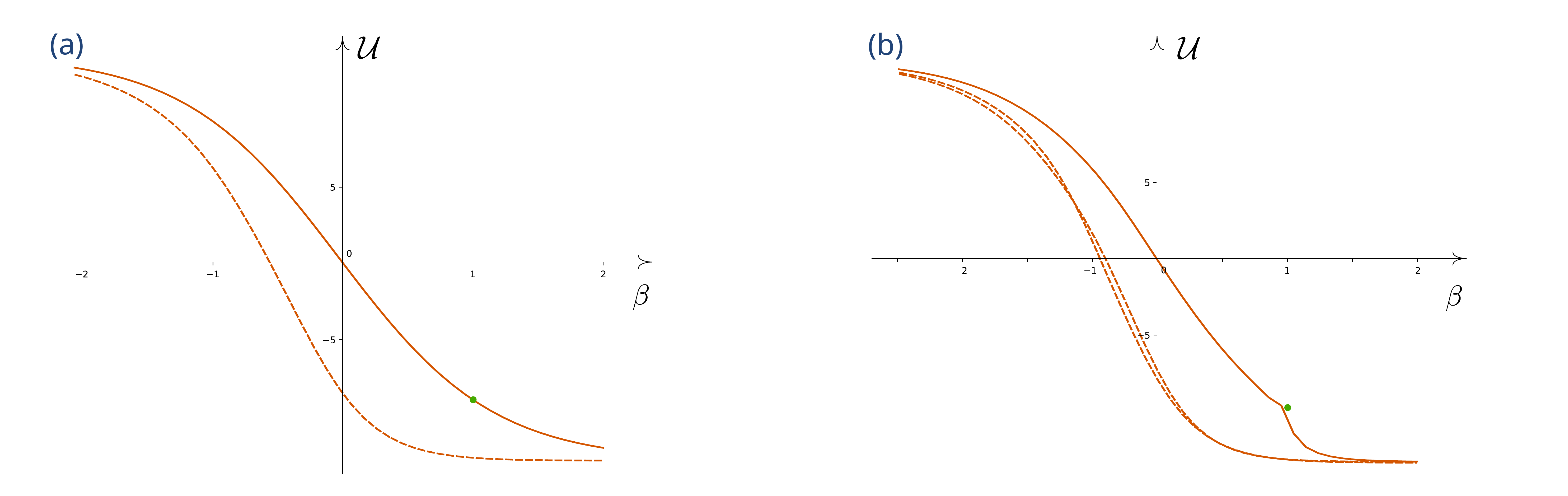}}
\begin{center}
    \begin{minipage}{\textwidth}
        \vspace{0.2cm}

\centering
\vspace{1cm}
\usebox0
    \vspace{-0.2cm}
    \caption{
        Energy-temperature relationships on a graph with two loops. The initial potential $\beta h$ (solid line) 
        reaches (a) a cuspidal singularity and (b) a generic singularity at $\beta = 1$.
        Dashed lines correspond to mean energies of stable equilibria obtained from homologous 
        initial conditions $\beta h + \delta \ph$.
        Note that $\beta < 0$ should be understood as a reversal of interactions, 
        the limit $\beta \to \pm0$ being equivalent to $T \to \infty$.
    }
\end{minipage}
\end{center}
    \vspace{-0.5cm}
\end{figure} 

Letting $\bar v = \beta v$, note that
solutions of \ref{S-crit} can be written 
$p = \rho^\beta(\zeta v)$ with 
$v \in [h] \cap \Fix^\beta$ solution of \ref{varF-crit},
as $\bar v \in [\beta h] \cap \Fix^1$ 
is equivalent to $v \in [h] \cap \Fix^\beta$.
The form of theorem \ref{thm2} involves a simpler intersection problem 
in $C_0$, 
between the linear subspace $(\beta h)_{\beta \in \R} + \delta C_1$,
a non-linear energy constraint 
$\langle \rho^1(\zeta \bar v), h\rangle = \U$,
and the consistent manifold $\Fix^1$ that does not depend on $\beta$. 
The possible multiplicity of solutions for a given energy 
constraint $\U$ may however occur for different values of the Lagrange 
multiplier $\beta$.

In the high temperature limit $\beta \to 0$,
one can show that $[h]$ is transverse to $\Fix^\beta$
\cite[prop 5.11]{phd}, or equivalently 
that $\Img(\delta) \cap T_0 \Fix = \{ 0 \}$, 
so that a univocal energy-temperature relationship $\beta_h(\U)$ 
can be defined in a neighbourhood ${\cal V}_0$ of $h = 0$.
In this quasi-linear regime, all three problems 
are therefore equivalent 
to finding the intersection $[h] \cap \Fix^{\beta_h(\U)}$ restricted to ${\cal V}_0$.

Let us say that $\bar v \in [\beta h] \cap \Fix^1$ is 
{\it singular} if $T_v \Fix^1 \cap \delta C_1 \neq 0$, which may occur 
for some $\beta = \beta_c$.
A generic singular point, although consistent, is not stable 
(cuspidal singularities are one exception). 
The diffusion flow will therefore generically depart from a singular
potential, one should then expect a sharp drop in internal energy 
for $\beta > \beta_c$ as figure 1 illustrates empirically on a graph with two loops.
In general, one may define an energy spectrum 
$\bar \U_h(\beta) : \R_+^* \to \Pow_{\rm fin}(\R)$ 
as the image of $[h] \cap \Fix^\beta$ 
under $v \mapsto \langle \rho^\beta(\zeta v), h \rangle$ to account for the multiplicity 
of equilibria. Figure 1, 
obtained in the Ising model on a graph with two loops, 
illustrates a case where $\bar \U_h(0)$ 
and $[0] \cap \Fix$ are degenerate of cardinal $\geq 3$. 
The singular space can be parameterised by polynomial equations on 
$p = \rho(\zeta h) \in \Gamma_0$ which can be solved exactly in simple examples 
\cite[chap 6]{phd}. They compute 
the determinant of the linearised diffusion flow 
restricted to $\delta C_1$.

\bibliography{biblio/biblio.bib}
\bibliographystyle{unsrt}

\end{document}